\newtheorem{theorem}{Theorem}[section]
\newtheorem{lemma}[theorem]{Lemma}
\theoremstyle{definition}
\theoremstyle{remark}
\newtheorem{remark}[theorem]{Remark}
\numberwithin{equation}{section}
\begin{document}

\title[A Minimal Uncertainty Product for Semiclassical Wave Packets]%
{A Minimal Uncertainty Product for
One Dimensional Semiclassical Wave Packets}


\author{George A. Hagedorn}
\address{Department of Mathematics and Center for Statistical Mechanics,
Mathematical Physics, and Theoretical Chemistry, Virginia Polytechnic Institute
and State University, Blacksburg, Virginia 24061--0123}
\email{hagedorn@math.vt.edu}
\thanks{The author was supported in part by NSF Grant \#DMS--1210928.}

\subjclass{Primary 81Q20, 81S30
}
\date{July 14, 2012.}

\dedicatory{Happy $60^{\mbox{\scriptsize th}}$ birthday, Mr.~Fritz!}

\keywords{Quantum Mechanics, Semiclassical Wave Packets}

\begin{abstract}
Although real, normalized Gaussian wave packets minimize the product
of position and momentum uncertainties, generic complex
normalized Gaussian wave packets do not.
We prove they minimize an alternative product of uncertainties that correspond to variables that are phase space rotations of position and momentum.   
\end{abstract}

\maketitle

\section{Introduction}
In studying small $\hbar$ asymptotics of solutions to the time--dependent
Schr\"odinger equation, semiclassical wave packets have proven very useful.
(See {\it e.g.}, \cite{raise} and its references, or more modern works, such as
\cite{FGL}, \cite{GH}, or \cite{HJ}.)
In the case of one degree of freedom,
for fixed, allowed values of the parameters
$A,\,B,\,\hbar,\,a,\,\mbox{and}\,\eta$, 
these wave packets are an orthonormal basis of
$L^2(\mathbb{R},\,dx)$
that we denote by
$\{\,\varphi_k(A,\,B,\,\hbar,\,a,\,\eta,\,x)\,\}$, where $k=0,\,1,\,\ldots$\,.

By making a proper choice of the parameters, one can write any normalized,
one dimensional complex Gaussian wave packet as 
$\varphi_0(A,\,B,\,\hbar,\,a,\,\eta,\,x)$.
Two of the restrictions on the parameters are the conditions
$\mbox{Re}\,\overline{A}B\,=\,1$ and $\hbar>0$.
By a fairly straightforward calculation, one can prove that
the usual position, momentum uncertainty product in the state
$\varphi_0(A,\,B,\,\hbar,\,a,\,\eta,\,x)$ satisfies
$\displaystyle\Delta x\,\Delta p\,=\,\frac{\hbar}2\,|A|\,|B|$.
When the complex phases of $A$ and $B$ are the same,
$\varphi_0(A,\,B,\,\hbar,\,a,\,\eta,\,x)$ is a phase times a real
Gaussian, and it is a standard result that this product takes its minimal value 
$\displaystyle\Delta x\,\Delta p\,=\,\frac{\hbar}2$.
When the phases of $A$ and $B$ are different, $|A|\,|B|>1$,
and consequently,
$\displaystyle\Delta x\,\Delta p\,>\,\frac{\hbar}2$.
As usual, $x$ and $p$ are the position and momentum operators, and
we define the uncertainty for a self-adjoint observable $X$
in the state $\psi$ to be
$\Delta X\,=\,\sqrt{\langle\,\psi,\,X^2\,\psi\,\rangle\,-\,
\langle\,\psi,\,X\,\psi\,\rangle^2}$.

The goal of this paper is to prove an alternative minimum uncertainty product
result for the state
$\varphi_0(A,\,B,\,\hbar,\,a,\,\eta,\,x)$.
We define two ``rotated'' operators
\begin{eqnarray}\nonumber
\alpha&=&\phantom{-}\,\cos(\theta)\ x\ +\ \sin(\theta)\ p
\qquad\qquad\mbox{and}
\\[2mm]\nonumber
\beta&=& -\,\sin(\theta)\ x\ +\ \cos(\theta)\ p.
\end{eqnarray}
We show that in any normalized state,
$$
\displaystyle\Delta\alpha\,\Delta\beta\,\ge\,\frac{\hbar}2,
$$
and we show in Theorem \ref{thm:main} that by choosing
$$
\theta\,=\,\frac 12\,\mbox{arctan}\,
\left(\,\frac{2\ \mbox{Im}\,(B\overline{A})}{|B|^2-|A|^2}\,\right),
$$
one has
$$
\Delta\alpha\,\Delta\beta\,=\,\frac{\hbar}2
$$ 
in the state
$\varphi_0(A,\,B,\,\hbar,\,a,\,\eta,\,x)$.
So, general semiclassical wave packets satisfy a minimal uncertainty relation.
It is not the usual relation, but the product for the rotated operators.

\begin{remark}Employing a sort of microlocal intuition,
we often like to think heuristically of supports of
quantum states in phase space. Weyl asymptotics  and
Bohr--Sommerfeld rules suggest that
a normalized state should occupy a phase space area of $2\pi\hbar$.
From this viewpoint, we envision the usual, frequency $\omega$
harmonic oscillator ground state as having phase space support of
the interior of an ellipse that is centered at the origin and has semiaxes
$\sqrt{2\,\hbar/\omega}$ in the $x$ direction and
$\sqrt{2\,\hbar\,\omega}$ in the $p$ direction.

On this intuitive level, the region of phase space corresponding to\\
$\varphi_0(A,\,B,\,\hbar,\,a,\,\eta,\,x)$ is such an ellipse that has been
rotated through the angle $\theta$
and then translated so its center is at $(a,\,\eta)$.
\end{remark}

\begin{remark}The situation in more than one dimension is significantly
more complicated.  Although we believe an analogous result must be true,
and we have received some preliminary calculations from Vidian Rousse
\cite{rousse}, we have not seen a complete argument.
In $n$ dimensions, one would like to find an
orthogonal, symplectic matrix depending on
 $A$ and $B$
that would define rotated operators
$\alpha_j$ and $\beta_j$ for $j=1,\,2,\,\cdots,\,n$,
such that in the state
$\varphi_0(A,\,B,\,\hbar,\,a,\,\eta,\,x)$,
$$
\Delta\alpha_j\,\Delta\beta_j\ =\ \frac\hbar 2
$$
for each $j$.

We have tried to generalize the one dimensional proof given in
Section \ref{MainSection} to $n$ dimensions. The proof relies on
finding the minimum of a function of $n^2$ variables.
(The symplectic orthogonal group is isomorphic to the
$n^2$ dimensional real Lie group $U(n)$.)
Using the second derivative test to separate minima,
maxima, and saddles becomes exceptionally complicated. 
\end{remark}

\vskip 5mm
\section{Observations about the Angle of Rotation}\label{angle}
The particular value of $\theta$ may seem rather bizarre, but it is natural.
The wave packet $\varphi_0(A,\,B,\,\hbar,\,a,\,\eta,\,x)$ is the ground
state of a Hamiltonian that is quadratic in $x$ and $p$. That Hamiltonian
is explicitly
$$
H\ =\ \frac 12\ (\,x\quad p\,)\,\left(
\begin{array}{cc}|B|^2&\mbox{Im}\,(B\overline{A})\\[2mm]
\mbox{Im}\,(B\overline{A})&|A|^2\end{array}\right)\,
\left(\begin{array}{c}x\\[2mm] p\end{array}\right).
$$
In terms of the raising and lowering operators we use below,
$$
H\ =\ \frac \hbar2\ \left(\mathcal{A}^*\mathcal{A}\,+\,
\mathcal{A}\mathcal{A}^*\right).
$$
A rotation through angle $\theta$ diagonalizes the real symmetric matrix
$$
\frac 12\
\left(
\begin{array}{cc}|B|^2&\mbox{Im}\,(B\overline{A})\\[2mm]
\mbox{Im}\,(B\overline{A})&|A|^2\end{array}\right),
$$
whose eigenvalues are
$$
\frac 14\ \left\{\,(|A|^2+|B|^2)\ \pm\
\sqrt{(|A|^2-|B|^2)^2\,+\,4\,(\mbox{Im}\,(B\overline{A}))^2}\ \right\}.
$$
The product of these eigenvalues is the determinant of the matrix
for $H$. The calculation is a bit tedious, but for allowed values of
the parameters, the product is $1/4$.

When $H$ is associated with a diagonal matrix, $\theta=0$, and 
one is essentially back to considering a standard frequency $\omega$
harmonic oscillator Hamiltonian, where $A=\omega^{-1/2}$
and $B=\omega^{1/2}$.  In this case, the standard uncertainty product
is $\hbar/2$. When considering the matrix associated with $H$,
the value of $\theta$ is very natural.

The vectors
$\varphi_k(A,\,B,\,\hbar,\,a,\,\eta,\,x)$ also diagonalize the Hamiltonians
$$
H_1\ =\ \hbar\,\mathcal{A}\mathcal{A}^*\qquad\mbox{and}\qquad
H_2\ =\ \hbar\,\mathcal{A}^*\mathcal{A}.
$$
The matrices associated with these quadratic Hamiltonians are
$$
\frac 12\ \left(
\begin{array}{cc}|B|^2&-\,i\,B\overline{A}\\[2mm]
i\,A\overline{B}&|A|^2\end{array}\right)
\qquad\mbox{and}\qquad
\frac 12\ \left(
\begin{array}{cc}|B|^2&i\,A\overline{B}\\[2mm]
-\,i\,B\overline{A}&|A|^2\end{array}\right),
$$
respectively. In general, these matrices 
are not real symmetric and do not have real eigenvectors.
However, the quadratic forms in the classical variables $x$ and $p$
are the same as the one for $H$.
So, they lead to the same value of $\theta$.

\vskip 5mm
\section{Preliminary Comments about Uncertainty Products}
We wish to begin with the standard argument for the Heisenberg uncertainty
relation
\begin{equation}\label{Heisenberg}
\Delta x\,\Delta p\ \ge\ \frac{\hbar}2.
\end{equation}
This is a consequence of the more general result that for any two
self-adjoint operators
$X$ and $Y$, and any normalized state $\psi$, we have
\begin{equation}\label{general}
\Delta X\,\Delta Y\ \ge\
\frac 12\ \left|\,\langle\,\psi,\,[X,\,Y]\,\psi\,\rangle\,\right|.
\end{equation}
(One can take both sides as infinite if $\psi$ is not in the
appropriate domains.)

The formal argument for proving this is to note that the square of the norm
of
$$
\left\{\,\left(X\,-\,\langle\psi,\,X\,\psi\rangle\right)\,+\,
i\,\lambda\,\left(Y\,-\,\langle\psi,\,Y\,\psi\rangle\right)\,\right\}\ \psi
$$
must be positive.
By explicit calculation, this leads to the inequality
$$
\Delta X^2\,+\,\lambda^2\,\Delta Y^2\ +\,\lambda\,
\langle\,\psi,\,i\,[X,\,Y]\,\psi\,\rangle\ \ge\ 0.
$$
The left hand side is minimized by taking
$\displaystyle
\lambda\,=\,-\,\frac{\langle\,\psi,\,i\,[X,\,Y]\,\psi\,\rangle}{2\,\Delta Y^2}
$.
Since the inequality is true for this value for $\lambda$, we see that
$$
\Delta X^2\ -\ \frac{\langle\,\psi,\,i\,[X,\,Y]\,\psi\,\rangle^2}{4\,\Delta Y^2}
\ \ge\ 0,
$$
and inequality (\ref{general}) follows immediately.

Since $[x,\,p]\,=\,i\,\hbar$, inequality (\ref{general})
implies inequality (\ref{Heisenberg}). We then
note that for any value of $\theta$, we have
\begin{eqnarray}\nonumber
[\alpha,\,\beta]&=&
[x\,\cos(\theta)\,+\,p\,\sin(\theta),\,-\,x\,\sin(\theta)\,+\,p\,\cos(\theta)]
\\[2mm]\nonumber
&=&\cos^2(\theta)\,[x,\,p]\ -\ \sin^2(\theta)\,[p,\,x]
\\[2mm]\nonumber
&=&i\,\hbar.
\end{eqnarray}
So, by another application of inequality (\ref{general}), we obtain
the uncertainty relation that for any normalized state $\psi$,
\begin{equation}\label{rotated}
\Delta\alpha\,\Delta\beta\ \ge\ \frac{\hbar}2.
\end{equation}
Our main result is that one actually has equality when
$\psi\,=\,\varphi_0(A,\,B,\,\hbar,\,a,\,\eta,\,x)$ and
$\displaystyle\theta\,=\,\frac 12\,\mbox{arctan}\,
\left(\,\frac{2\ \mbox{Im}\,(B\overline{A})}{|B|^2-|A|^2}\,\right)$.

\vskip 5mm
\section{One Dimensional Semiclassical Wave Packets}
The one dimensional semiclassical wave packets are most easily defined
by using raising and lowering operators \cite{raise}:\quad
The number $a\in\mathbb{R}$ denotes the mean position of the wave packets.
The number $\eta\in\mathbb{R}$ denotes the mean momentum. We assume
the semiclassical parameter $\hbar$ is positive, and we choose any two complex
numbers $A$ and $B$ that satisfy
$\overline{A}B\,+\,\overline{B}A\,=\,2$.
We then define
$$
\varphi_0(A,\,B,\,\hbar,\,a,\,\eta,\,x)\ =\ \pi^{-1/4}\,\hbar^{-1/4}\,A^{-1/2}\,
\exp\left\{\,-\,\frac{B\,(x-a)^2}{2\,A\,\hbar}\ +\ i\,\eta\,(x-a)/\hbar\right\}.
$$
(The square root $A^{-1/2}$ can take either sign. In applications, the sign
is determined by an initial choice and continuity in time.)
We note that this vector is normalized because
$\overline{A}B\,+\,\overline{B}A\,=\,2$.
Also, any complex, normalized Gaussian can be written this way.

In analogy with the usual harmonic oscillator, we define raising and lowering
operators by
$$
{\mathcal A}(A,\,B,\,\hbar,\,a,\,\eta)^*\ =\
\frac 1{\sqrt{2\,\hbar}}\ 
\left[\,\overline{B}\,(x-a)\,-\,i\,\overline{A}\,(p-\eta)\,\right]
$$
and
$$
{\mathcal A}(A,\,B,\,\hbar,\,a,\,\eta)\ =\
\frac 1{\sqrt{2\,\hbar}}\ 
\left[\,B\,(x-a)\,+\,i\,A\,(p-\eta)\,\right].
$$
Using the raising operator inductively, starting from $k=0$, we define
$$
\varphi_{k+1}(A,\,B,\,\hbar,\,a,\,\eta,\,x)\ =\ \frac 1{\sqrt{k+1}}\
{\mathcal A}(A,\,B,\,\hbar,\,a,\,\eta)^*\
\varphi_k(A,\,B,\,\hbar,\,a,\,\eta,\,x).
$$
Then, for the lowering operator, we also have
$$
{\mathcal A}(A,\,B,\,\hbar,\,a,\,\eta)\
\varphi_0(A,\,B,\,\hbar,\,a,\,\eta,\,x)\ =\ 0
$$
and
$$
{\mathcal A}(A,\,B,\,\hbar,\,a,\,\eta)\
\varphi_k(A,\,B,\,\hbar,\,a,\,\eta,\,x)\ =\ \sqrt{k}\
\varphi_{k-1}(A,\,B,\,\hbar,\,a,\,\eta,\,x),
$$
for $k=1,\,2,\,\ldots\,$.

By using the raising and lowering operators, it is quite easy \cite{raise} to prove
that $\{\,\varphi_k(A,\,B,\,\hbar,\,a,\,\eta,\,x)\,\}$ is an orthonormal basis
of $L^2(\mathbb{R},\,dx)$, and that in the state
$\varphi_k(A,\,B,\,\hbar,\,a,\,\eta,\,x)$,
$$
\Delta x\ =\ \left(\frac\hbar 2\right)^{1/2}\ |A|\ \sqrt{2k+1},
$$
and
$$
\Delta p\ =\ \left(\frac\hbar 2\right)^{1/2}\ |B|\ \sqrt{2k+1}.
$$
The proof we present below for $\varphi_0$ generalizes to show that with\\
$\displaystyle\theta\,=\,\frac 12\,\mbox{arctan}\,
\left(\,\frac{2\ \mbox{Im}\,(B\overline{A})}{|B|^2-|A|^2}\,\right)$,
in the state $\varphi_k(A,\,B,\,\hbar,\,a,\,\eta,\,x)$ we have
$$
\Delta\alpha\,\Delta\beta\ =\ \frac\hbar 2\ (2k+1).
$$

\vskip 5mm
Our proof of the rotated uncertainty relation (\ref{rotated}) will make use of
the raising and lowering operators to represent $(x-a)$ and $(p-\eta)$. From the
definitions above, one easily sees that
\begin{equation}\label{xrepresentation}
(x-a)\ =\ \sqrt{\frac\hbar 2}\ \left\{\,
A\,\mathcal{A}(A,\,B,\,\hbar,\,a,\,\eta)^*\ +\
\overline{A}\,\mathcal{A}(A,\,B,\,\hbar,\,a,\,\eta)\,\right\}
\end{equation}
and
\begin{equation}\label{prepresentation}
(p-\eta)\ =\ i\ \sqrt{\frac\hbar 2}\ \left\{\,
B\,\mathcal{A}(A,\,B,\,\hbar,\,a,\,\eta)^*\ -\
\overline{B}\,\mathcal{A}(A,\,B,\,\hbar,\,a,\,\eta)\,\right\}.
\end{equation}

\vskip 5mm
\section{The Rotated Uncertainty Product}\label{MainSection}
We begin with a technical lemma that is proved by
simple calculation.

\begin{lemma}\label{prelim}
If $A(t)$ and $B(t)$ satisfy
\begin{eqnarray}\label{Aeqn}
\dot{A}(t)&=&i\,B(t)\qquad\mbox{and}
\\[2mm]\label{Beqn}
\dot{B}(t)&=&i\,A(t),
\end{eqnarray}
then we have the following time derivatives:
\begin{eqnarray}\label{fDeriv}
\frac{d\phantom{i}}{dt}\,
\left(|A(t)|^2\,|B(t)|^2\right)&=&
2\,\left(|A(t)|^2-|B(t)|^2\right)\ \mbox{Im}\,(B(t)\overline{A(t)}),
\\[2mm]\label{ImDeriv}
\frac{d\phantom{i}}{dt}\,
\left(\mbox{Im}\,(B(t)\overline{A(t)})\right)&=&|A(t)|^2-|B(t)|^2,
\qquad\mbox{and}
\\[2mm]\label{dogDeriv}
\frac{d\phantom{i}}{dt}\,
\left(|A(t)|^2-|B(t)|^2\right)&=&-\,4\,\mbox{Im}\,(B(t)\overline{A(t)}).
\end{eqnarray}
\end{lemma}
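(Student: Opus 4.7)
The plan is to derive all three identities by direct application of the product rule to $A\overline{A}$, $B\overline{B}$, $B\overline{A}$, and $|A|^2\,|B|^2$, using the equations of motion (\ref{Aeqn}) and (\ref{Beqn}) together with the elementary identity $z - \overline{z} = 2i\,\mbox{Im}(z)$.

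The first step is to obtain the two basic derivatives $\frac{d}{dt}|A(t)|^2 = -2\,\mbox{Im}(B(t)\overline{A(t)})$ and $\frac{d}{dt}|B(t)|^2 = +2\,\mbox{Im}(B(t)\overline{A(t)})$. Both come from one line of algebra: for instance, $\dot{A}\overline{A} + A\overline{\dot{A}} = iB\overline{A} - iA\overline{B} = i(B\overline{A} - \overline{B\overline{A}}) = -2\,\mbox{Im}(B\overline{A})$, and the symmetric calculation handles $|B|^2$ and carries the opposite sign. Subtracting these two formulas immediately yields (\ref{dogDeriv}).

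For (\ref{ImDeriv}), I would differentiate $B\overline{A}$ itself: using the equations of motion, $\frac{d}{dt}(B\overline{A}) = \dot{B}\overline{A} + B\overline{\dot{A}} = iA\overline{A} - iB\overline{B} = i(|A|^2 - |B|^2)$, which is purely imaginary, so taking imaginary parts gives (\ref{ImDeriv}) with no further work. Finally, (\ref{fDeriv}) follows from the product rule applied to $|A|^2\,|B|^2$, substituting the two derivative formulas from the first step and collecting the resulting terms into $2(|A|^2 - |B|^2)\,\mbox{Im}(B\overline{A})$.

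Since the entire lemma is a straightforward calculation, there is no substantive obstacle; the only care required is in tracking signs and conjugates. All three identities reduce to the single computation in the first step, plus one more application of the product rule for (\ref{fDeriv}).
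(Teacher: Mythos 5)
Your proposal is correct, and it is exactly the direct computation the paper has in mind (the paper gives no written proof, stating only that the lemma "is proved by simple calculation"). All four derivative computations check out, including the signs.
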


We now state and prove our main result:
\begin{theorem}\label{thm:main}
Let
$A,\,B,\,\hbar,\,a,\,\mbox{and}\ \eta$
be any allowed values of the parameters.
If we choose
$$
\theta\,=\,\frac 12\,\mbox{arctan}\,
\left(\,\frac{2\ \mbox{Im}\,(B\overline{A})}{|B|^2-|A|^2}\,\right),
$$
then the state $\varphi_0(A,\,B,\,\hbar,\,a,\,\eta,\,x)$
minimizes the uncertainty
product for $\alpha$ and $\beta$. {\em I.e.},
$$
\Delta\alpha\,\Delta\beta\ =\ \frac\hbar 2.
$$
\end{theorem}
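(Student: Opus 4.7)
My plan is to express $\alpha$ and $\beta$ in the raising/lowering operator language so that they have exactly the same algebraic form as $(x-a)$ and $(p-\eta)$, but with $A$ and $B$ replaced by some rotated pair $A(\theta)$, $B(\theta)$. Concretely, using the representations \eqref{xrepresentation} and \eqref{prepresentation} in the definitions of $\alpha$ and $\beta$, a short linear combination gives
\[
\alpha - \langle\alpha\rangle \ =\ \sqrt{\tfrac{\hbar}{2}}\bigl[A(\theta)\,\mathcal{A}^* + \overline{A(\theta)}\,\mathcal{A}\bigr],\qquad
\beta - \langle\beta\rangle \ =\ i\sqrt{\tfrac{\hbar}{2}}\bigl[B(\theta)\,\mathcal{A}^* - \overline{B(\theta)}\,\mathcal{A}\bigr],
\]
where $A(\theta) = A\cos\theta + iB\sin\theta$ and $B(\theta) = B\cos\theta + iA\sin\theta$. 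The main calculation in this step is the straightforward matching of coefficients of $\mathcal{A}^*$ and $\mathcal{A}$ after substituting.

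With that in hand, the computation of $\Delta\alpha$ and $\Delta\beta$ in $\varphi_0$ becomes a carbon copy of the known $\varphi_0$ computation of $\Delta x$ and $\Delta p$. Using $\mathcal{A}\,\varphi_0 = 0$ and $[\mathcal{A},\mathcal{A}^*]=1$, I would obtain $\Delta\alpha^2 = (\hbar/2)|A(\theta)|^2$ and $\Delta\beta^2 = (\hbar/2)|B(\theta)|^2$, so that
\[
\Delta\alpha\,\Delta\beta \ =\ \frac{\hbar}{2}\,|A(\theta)|\,|B(\theta)|.
\]
The problem thus reduces to showing $|A(\theta)||B(\theta)| = 1$ at the prescribed angle.

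For this I invoke Lemma \ref{prelim}. The pair $(A(\theta),B(\theta))$ above is exactly the solution of the ODE system \eqref{Aeqn}--\eqref{Beqn} with initial data $(A,B)$ evaluated at time $\theta$. Combining \eqref{fDeriv}, \eqref{ImDeriv}, and \eqref{dogDeriv} gives
\[
\frac{d}{d\theta}\Bigl\{\,|A(\theta)|^2|B(\theta)|^2 - \bigl(\mbox{Im}\,(B(\theta)\overline{A(\theta)})\bigr)^2\,\Bigr\} = 0,
\]
so this quantity is a conserved quantity of the flow. Using $\overline{A}B + A\overline{B}=2$, its initial value is $|A|^2|B|^2 - (\mbox{Im}\,(B\overline{A}))^2 = (\mbox{Re}\,(\overline{A}B))^2 = 1$.

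It remains to choose $\theta$ so that $\mbox{Im}\,(B(\theta)\overline{A(\theta)}) = 0$; the conservation law then forces $|A(\theta)||B(\theta)| = 1$, completing the proof. A direct expansion gives
\[
\mbox{Im}\,(B(\theta)\overline{A(\theta)}) \ =\ \mbox{Im}\,(B\overline{A})\cos 2\theta\, + \tfrac{1}{2}(|A|^2-|B|^2)\sin 2\theta,
\]
which vanishes precisely when $\tan 2\theta = 2\,\mbox{Im}\,(B\overline{A})/(|B|^2-|A|^2)$, matching the formula in the theorem. The main obstacle, if any, is bookkeeping: verifying the raising/lowering identification of $\alpha$ and $\beta$ cleanly, and checking that the conserved quantity from the lemma is exactly what one needs. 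Everything else is then dictated by the algebra.
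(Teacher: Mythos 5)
Your proposal is correct, and while it shares the paper's skeleton --- reduce everything to showing $|A(\theta)|\,|B(\theta)|=1$ for the rotated parameters $A(\theta)=A\cos\theta+iB\sin\theta$, $B(\theta)=B\cos\theta+iA\sin\theta$, then verify $\mbox{Im}\,(B(\theta)\overline{A(\theta)})=0$ at the prescribed angle --- you implement the two key steps differently, and in each case a bit more economically. First, the paper obtains the rotated parameters dynamically: it invokes the quantum harmonic oscillator propagator from \cite{raise}, the equivalence between rotating the observables counterclockwise and rotating the state clockwise, and the explicit solution (\ref{ODEs}) of the classical flow. You instead get the same $A(\theta),B(\theta)$ by purely algebraic substitution of (\ref{xrepresentation})--(\ref{prepresentation}) into the definitions of $\alpha$ and $\beta$ and matching coefficients of $\mathcal{A}^*$ and $\mathcal{A}$; this keeps the argument self-contained (no appeal to the propagation result) and immediately yields $\Delta\alpha^2=(\hbar/2)|A(\theta)|^2$, $\Delta\beta^2=(\hbar/2)|B(\theta)|^2$ from $\mathcal{A}\varphi_0=0$ and $[\mathcal{A},\mathcal{A}^*]=1$. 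Second, where the paper runs a first/second-derivative analysis of $f(t)=|A(t)|^2|B(t)|^2$ (ruling out the spurious critical points via $\ddot f<0$), you observe that (\ref{fDeriv}) and (\ref{ImDeriv}) give the conserved quantity $|A(t)|^2|B(t)|^2-\bigl(\mbox{Im}\,(B(t)\overline{A(t)})\bigr)^2\equiv 1$ --- which is just the statement that $\mbox{Re}\,(B(t)\overline{A(t)})=1$ is preserved, a fact the paper also uses but does not derive from the lemma. The conservation law makes the second-derivative test unnecessary: once $\mbox{Im}\,(B(\theta)\overline{A(\theta)})=0$, the identity forces $|A(\theta)|\,|B(\theta)|=1$ with no discussion of minima. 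The final trigonometric computation pinning down $\tan 2\theta$ is identical in both arguments. A trivial quibble: you cite (\ref{dogDeriv}) as an ingredient of the conservation law, but only (\ref{fDeriv}) and (\ref{ImDeriv}) are needed.
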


\begin{proof}
We can prove this by an explicit, but tedious calculation or by the following
more appealing argument.

First we note that the values of $a$ and $\eta$ are irrelevant,
so we can set them both to zero.
Second we note that clockwise rotations of phase space are 
generated by the standard, frequency 1, classical harmonic oscillator Hamiltonian
$(p^2+x^2)/2$.  If we propagate with its quantum analog, the state
$\varphi_0(A(0),\,B(0),\,\hbar,\,0,\,0,\,x)$ evolves to a new Gaussian
$\varphi_0(A(t),\,B(t),\,\hbar,\,0,\,0,\,x)$, where \cite{raise}
$A(t)$ and $B(t)$ are given in equations (\ref{ODEs}) below.

Applying the ``counterclockwise'' rotation to the quantum operators
$x$ and $p$ to obtain $\alpha$ and $\beta$ is equivalent to keeping the
original operators $x$ and $p$, but rotating the state $\varphi_0$ in the
clockwise direction.
Thus, proving the theorem is equivalent to showing that
$\Delta x\,\Delta p\,=\,\hbar/2$  in the state
$\varphi_0(A(\theta),\,B(\theta),\,\hbar,\,0,\,0,\,x)$.

Since $\displaystyle\Delta x\,\Delta p\,=\,\frac\hbar 2\,|A(\theta)|\,|B(\theta)|$,
it suffices to show that $|A(\theta)|\,|B(\theta)|\,=\,1$.

We can find the minimum of $|A(t)|\,|B(t)|$ by setting the derivative of\\
$f(t)\,=\,|A(t)|^2\,|B(t)|^2$ to zero. By formula (\ref{fDeriv}), this requires
$$
\dot{f}(t)\ =\ 2\,\left(|A(t)|^2-|B(t)|^2\right)\,
\mbox{Im}\,(B(t)\overline{A(t)})
\ =\ 0.
$$
Thus, we must have~ $\mbox{Im}\,(B(t)\overline{A(t)})\,=\,0$~ or~
$|A(t)|^2-|B(t)|^2\,=\,0$.

To obtain a contradiction, suppose a relative minimum occurs with\\
$\mbox{Im}\,(B(t)\overline{A(t)})\,\ne\,0$. Then we must have
$|A(t)|^2-|B(t)|^2\,=\,0$.

From (\ref{ImDeriv}) and (\ref{dogDeriv}),
we see that the second derivative of $f$ is
$$
\ddot{f}(t)\ =\ 2\,\left(|A(t)|^2-|B(t)|^2\right)^2\,-\,
8\,\left(\mbox{Im}\,(B(t)\overline{A(t)})\right)^2.
$$
From our assumptions above, this quantity is strictly negative, and we have
found a maximum of $f$ instead of a minimum.
Thus, at any minimum of $f$, we must have
\begin{equation}\label{goal}
\mbox{Im}\,(B(t)\overline{A(t)})\ =\ 0.
\end{equation}
Since $\mbox{Re}\,(B(t)\overline{A(t)})\ =\ 1$, this condition forces
$|A(t)|\,|B(t)|\,=\,1$, and hence our desired result
$\displaystyle\Delta x\,\Delta p\,=\,\frac\hbar 2$ .

We next note that
\begin{eqnarray}\nonumber
A(t)&=&A(0)\,\cos(t)\,+\,i\,B(0)\,\sin(t)
\\[2mm]\label{ODEs}
B(t)&=&i\,A(0)\,\sin(t)\,+\,B(0)\,\cos(t).
\end{eqnarray}
We obtain these relations by explicitly solving the linear system
of ordinary differential equations (2.16) of \cite{raise} for this easy
special case.

From these relations, we see that
\begin{eqnarray}\nonumber
&&\mbox{Im}\,(B(t)\,\overline{A(t)})
\\[2mm]\nonumber
&=&\mbox{Im}\,\left\{\,
(i\,A(0)\,\sin(t)\,+\,B(0)\,\cos(t)\right)
\left(\overline{A(0)}\,\cos(t)\,-\,i\,\overline{B(0)}\,\sin(t)\right)
\\[2mm]\nonumber
&=&\hspace{-1pt}\mbox{Im}\left\{
i\,(|A(0)|^2-|B(0)|^2)\cos(t)\sin(t)+B(0)\overline{A(0)}\cos^2(t)+
A(0)\overline{B(0)}\sin^2(t)\right\}
\\[2mm]\nonumber
&=&
(|A(0)|^2-|B(0)|^2)\,\cos(t)\,\sin(t)\,+\,
\mbox{Im}\,(B(0)\overline{A(0)})\,(\cos^2(t)\,-\,\sin^2(t)).
\end{eqnarray}

\vskip 5mm
So, equation (\ref{goal}) is equivalent to
$$
(|A(0)|^2-|B(0)|^2)\,\sin(2t)\,+\,
2\,\mbox{Im}\,(B(0)\overline{A(0)})\,\cos(2t)\ =\ 0,
$$
which is satisfied if we choose~ $t\,=\,\theta$.
This proves the theorem.
\end{proof}

\begin{remark}
Our comments about the choice of $\theta$ in Section \ref{angle} relied on
operators that were quadratic in the raising and lowering operators.
In the spirit of the proof of Theorem \ref{thm:main}, we can make similar
comments that just involve the lowering operator.

Let $U(t)$ denote the propagator for the standard, frequency 1, quantum
harmonic oscillator. Choosing $a=0$ and $\eta=0$, we have
$\mathcal{A}\,\varphi_0(A,\,B,\,\hbar,\,0,\,0,\,x)=0$.
Thus,
\begin{eqnarray}\nonumber
0&=&U(t)\,\mathcal{A}\,\varphi_0(A,\,B,\,\hbar,\,0,\,0,\,x)
\\[2mm]\nonumber
&=&U(t)\,\mathcal{A}\,U(t)^{-1}\,U(t)\,\varphi_0(A,\,B,\,\hbar,\,0,\,0,\,x)
\\[2mm]\nonumber
&=&
\left[B(x\,\cos(t)+p\,\sin(t))+iA(-x\,\sin(t)+p\,\cos(t))\right]
U(t)\,\varphi_0(A,\,B,\,\hbar,\,0,\,0,\,x)
\\[2mm]\nonumber
&=&
\left[(B\cos(t)-iA\sin(t))x+i(A\cos(t)-iB\sin(t))\,p\right]
U(t)\,\varphi_0(A,\,B,\,\hbar,\,0,\,0,\,x).
\\[2mm]\label{wow}&&
\end{eqnarray}
If we choose $t=\theta$, the coefficients of $x$ and $p$ inside the
square brackets have the same complex phase.
This is equivalent to
$$
\left[\,\gamma\,x\,+\,i\,\delta\,p\,\right]\,
U(t)\,\varphi_0(A,\,B,\,\hbar,\,0,\,0,\,x)\ =\ 0,
$$
where $\gamma$ and $\delta$ are real.
This implies $U(t)\,\varphi_0(A,\,B,\,\hbar,\,0,\,0,\,x)$ is a phase times
a real Gaussian, and that it consequently minimizes $\Delta x\,\Delta p$.
Undoing the rotation of phase space shows that
$\varphi_0(A,\,B,\,\hbar,\,0,\,0,\,x)$ minimizes the uncertainty product of
$\Delta\alpha\,\Delta\beta$.

Requiring the coefficients in (\ref{wow}) to have the same phases is
equivalent to solving (\ref{goal}) and equivalent
to solving the eigenvalue problem of Section \ref{angle}.
\end{remark}

\begin{remark}
If one is not interested in the value of $\theta$, but only its existence, the
portion of the proof after equation (5.6) can be replaced by the following:
\\[4mm]
It suffices to prove that equation (5.6) is satisfied for some $t$.
All rotations of $\mathbb{R}^2$ are symplectic, and the rotation group $SO(2)$ is compact. So, there exists a value of $t$ at which
$g(t)=\left(\mbox{Im}\,(B(t)\overline{A(t)})\right)^2$ takes its minimum.
At that minimum,\\ $\dot{g}(t)=2\,g(t)\,(|A(t)|^2-|B(t)|^2)$ must be zero.
So, $g(t)=0$ or $(|A(t)|^2-|B(t)|^2)=0$.
If $g(t)\ne 0$, we must have
$(|A(t)|^2-|B(t)|^2)=0$, and in that case,
it follows from explicit calculation that
$\ddot{g}(t)\,=\,-\,8\,g(t)^2\,<\,0$.
This cannot happen at a minimum, so we conclude that $g(t)$ must be zero.
\end{remark}

\vskip 1cm
\bibliographystyle{amsalpha}

\end{document}